\newtheorem{theorem}{Theorem}[section]
\newtheorem{corollary}{Corollary}[theorem]
\newtheorem{remark}[theorem]{Remark}
\newtheorem{assumption}[theorem]{Assumption}
\newcommand{\ie}{\textit{i.e., }}
\newcommand{\eg}{\textit{e.g., }}
\DeclareMathOperator*{\argmin}{arg\,min}
\newcommand{\model}{h}
\newcommand{\Var}{\operatorname{Var}}
\newcommand{\Cov}{\operatorname{Cov}}
\begin{document}

\begin{frontmatter}

\title{Multi-Level Multi-Fidelity Methods for Path Integral and Safe Control} 

\thanks[footnoteinfo]{Corresponding author Yorie Nakahira.}

\author[CMU]{Zhuoyuan Wang}\ead{zhuoyuaw@andrew.cmu.edu},    
\author[Purdue]{Takashi Tanaka}\ead{tanaka16@purdue.edu},               
\author[GT]{Yongxin Chen}\ead{yongchen@gatech.edu},  
\author[CMU]{Yorie Nakahira}\ead{ynakahir@andrew.cmu.edu}

\address[CMU]{Carnegie Mellon University, Pittsburgh, PA, USA}  
\address[Purdue]{Purdue University, West Lafayette, IN, USA}             
\address[GT]{Georgia Institute of Technology, Atlanta, GA, USA}        

\begin{keyword}                           
Stochastic control; Optimization under uncertainty; Path integral control; Risk quantification.             
\end{keyword}                             

\begin{abstract}                          
Sampling-based approaches are widely used in systems without analytic models to estimate risk or find optimal control. However, gathering sufficient data in such scenarios can be prohibitively costly. On the other hand, in many situations, low-fidelity models or simulators are available from which samples can be obtained at low cost. In this paper, we propose an efficient approach for risk quantification and path integral control that leverages such data from multiple models with heterogeneous sampling costs. A key technical novelty of our approach is the integration of Multi-level Monte Carlo (MLMC) and Multi-fidelity Monte Carlo (MFMC) that enable data from different time and state representations (system models) to be jointly used to reduce variance and improve sampling efficiency. We also provide theoretical analysis of the proposed method and show that our estimator is unbiased and consistent under mild conditions. Finally, we demonstrate via numerical simulation that the proposed method has improved computation (sampling costs) vs. accuracy trade-offs for risk quantification and path integral control. 
\end{abstract}

\end{frontmatter}

\section{Introduction}

Optimal and safe control are key to stochastic autonomous systems. For example, industrial robots need to operate efficiently for assembly tasks while maintaining safe distances to humans, under uncertain and complex working environments. Existing stochastic optimal and safe control methods usually require estimation of cost-to-go functions~\cite{fleming2012deterministic} and safety probabilities~\cite{wang2022myopically}. However, to acquire accurate estimate of such values is not trivial, because huge numbers of samples are needed to capture rare failure events and it is often hard to directly sample from the physical systems. 
In reality however, people often have access to multiple models of the system with different computation cost. For example, in robotic control one might have simulation models with minimal computation cost and the actual physical system with considerable data acquisition cost.
In this work, we leverage different models of the system and different time discretizations of the dynamics to efficiently estimate the safety probability and optimal control.

For safety probability estimation, the problem reduces to estimating the probability of rare risk events as they are complementary~\cite{kumamoto2007satisfying}.
Standard Monte Carlo (MC) method can empirically estimate risk probabilities but with heavy computation~\cite{rubino2009rare,hu2023simplified}. To improve sample complexity, importance sampling~\cite{cerou2012sequential,botev2013markov} and subset simulation~\cite{au2001estimation} are used on top of standard MC. Multi-level Monte Carlo (MLMC)~\cite{xu2020efficient} and multi-fidelity Monte Carlo (MFMC)~\cite{patsialis2021multi} are also used for risk quantification by simulating the system with different levels of time discretizations and different surrogate models, respectively. Both methods alone can achieve better sample complexity over standard MC, but a potential joint layering of both time discretization and system models has not been explored in the context of risk quantification. Besides, existing methods usually assume a static system model, while in this work we consider dynamical systems which is a more generic setting.

Path integral control is a type of numerical methods to solve stochastic optimal control problems by repeatedly performing forward MC rollouts of a sampled dynamics~\cite{kappen2005path,satoh2016iterative,zhang2021path}.
While time domain layering is considered in path integral control settings with MLMC~\cite{shumway2006path}, a potential combination with multi-fidelity models and low-dimensional representations have not been studied.
Besides, the variance and thus the sample complexity of the path integral scheme can be highly reduced if a near-optimal controller is sampled~\cite{thijssen2015path}. To acquire the form of such controller usually requires iterative sampling~\cite{pan2015sample}, but existing work did not leverage information from previous samplings to form more efficient estimators.

In this work, we propose a multi-level multi-fidelity (MLMF) framework to efficiently estimate the safety probability and optimal control in stochastic systems. Specifically, we extend the MLMC method to cases with different levels of models, as opposed to just different time discretizations. We then show that under certain conditions the MLMC method in state space is equivalent to the MFMC method. Based on the observation, we propose a unified multi-level multi-fidelity (MLMF) framework that incorporates layering in both time and state space, and derive unbiasedness property and optimal variance to show reduced sample complexities.
We also present how the proposed MLMF framework can estimate safety probability and path integral optimal control with potential low-dimensional representation of the state on a feature space, and show experiment results to validate its efficacy.

The proposed method has the following merits:
\begin{enumerate}
    \item Layering structure in both time and state space for more flexibility of usage and possibility of computation reduction (Alg.~\ref{alg:mlmf_app}).
    
    
    \item Unbiased and consistent estimation under mild conditions (Theorem~\ref{thm:unbiased_estimation} and Theorem~\ref{thm:optimal_variance}).

    \item Better sample complexity for estimating rare events such as risk probability (Fig.~\ref{fig:risk_comparison}).

    \item Layered and incremental sampling of near-optimal policy for efficient path integral control (Fig.~\ref{fig:cost_computation} and Fig.~\ref{fig:control_input}). 


\end{enumerate}

The rest of the paper is organized as follow. We present related work in Sec.~\ref{sec:related_work} and introduce the problem formulation in Sec.~\ref{sec:problem_formulation}. After that we present our proposed multi-level multi-fidelity framework in Sec.~\ref{sec:mlmf} and show theoretical analysis on unbiasedness and consistency of the estimation as well as optimal choice of the parameters in Sec.~\ref{sec:theoretical_analysis}.
We then demonstrate the usage of the proposed framework in safety probability and optimal control estimation in Sec.~\ref{sec:mlmf_application}, and present experiment results in Sec.~\ref{sec:experiments}. Finally, we conclude the paper in Sec.~\ref{sec:conclusion}.

\section{Related Work}
\label{sec:related_work}

\subsection{Multi-Fidelity Monte Carlo}
Multi-fidelity Monte Carlo (MFMC)~\cite{peherstorfer2016optimal} leverages different levels of system models, and construct a layered estimator to improve sample complexity over standard MC when different models with different computation cost are accessible.
MFMC has been used in uncertainty quantification~\cite{motamed2020multi,peherstorfer2018multifidelity}, risk assessment~\cite{patsialis2021multi}, micro-turbulence analysis~\cite{konrad2022data}, \textit{etc}.
However, to our best knowledge, no work has considered MFMC for path integral control. Besides, while some literature considers dimensionality reduction for static systems~\cite{fernandez2016review}, existing MFMC methods for dynamical control systems often consider surrogate models of the same dimension of the ground truth model~\cite{lau2023multi}. Our work explores the possibility of using dynamic models on a low-dimensional feature space.

\subsection{Multi-Level Monte Carlo}
Multi-level Monte Carlo (MLMC)~\cite{giles2015multilevel} incorporates multiple levels of estimate to form efficient estimator by increasing coupling and reducing variance between levels. This method alone can also improve sample complexity over standard MC, and has been used to solve stochastic differential equations (SDEs)~\cite{giles2014antithetic}, the exit times of SDEs~\cite{giles2018multilevel}, partial differential equations~\cite{abdulle2013multilevel}, and path integral control~\cite{jasra2022multilevel}, \textit{etc}. However, for dynamical systems, usually different levels of time/grid discretizations are used~\cite{abdulle2013stabilized}. Our work extends MLMC to incorporate different levels of state-space representations.

\section{Problem Formulation}
\label{sec:problem_formulation}
Consider the following system dynamics characterized by stochastic differential equations (SDEs)
\begin{equation}
\label{eq:dynamics}
    dx_t = f(x_t) \; dt + \sigma(x_t) (u_t \; dt + dw_t)
\end{equation}
where $x \in \mathbb{R}^n$ is the state, $u \in \mathbb{R}^m$ is the control input, $w$ is an $m$-dimensional standard Wiener process with $w_0 = \boldsymbol{0}$, and $\sigma: \mathbb{R}^n \rightarrow \mathbb{R}^m$ is a mapping from state to control space. 

For \textbf{safe control}, we assume the safe set is given by
\begin{equation}
\label{eq:safe_set_def}
    \mathcal{C}:=\{x: \phi(x) \geq 0\},
\end{equation}
where $\phi: \mathbb{R}^n \rightarrow \mathbb{R}$ is a differentiable function. 
The long-term safety probability starting from $x$ for time horizon $T$ is defined as
\begin{equation}
\label{eq:safe_prob_def}
    F(x) := \mathbb{P}(x_\tau \in \mathcal{C}, u_\tau = \mathcal{N}(x_\tau), \forall \tau \in [t, t+T] \mid x_t = x),
\end{equation}
where $\mathcal{N}: \mathbb{R}^n \rightarrow \mathbb{R}^m$ is a pre-defined control policy.
The goal is to estimate $F(x)$. 

To estimate the safety probability, one can run MC simulation of the system dynamics~\eqref{eq:dynamics} with the nominal controller $\mathcal{N}$ multiple times, and find the empirical ratio of safe trajectories to estimate the value of interest. Let $x^n$ denote the $n$-th trajectory, the estimation $P_{\text{safe}}$ is given by
\begin{equation}
\label{eq:safe_prob_est}
\begin{aligned}
    P_{\text{safe}} = \frac{1}{N} \sum \mathds{1}(\phi(x_{t}^n) > 0, \forall t \in [0, T]),
\end{aligned}
\end{equation}
where $N$ is the number of trajectories and $\mathds{1}(\cdot)$ is the indicator function.

For \textbf{optimal control}, we aim to find the control input that minimizes the following cost function
\begin{equation}
    J(u)=\mathbb{E}\left\{\int_0^T g\left(x_t, u_t\right) d t+\Psi\left(x_T\right)\right\},
\end{equation}
where $T$ is the horizon of the problem, $\Psi: \mathbb{R}^n \rightarrow \mathbb{R}$ is the terminal cost, $g$ is the running cost with the form
\begin{equation}
    g(x, u) = \ell(x) + \frac{1}{2}\|u\|^2,
\end{equation}
where $\ell(x)$ is the cost on states, and quadratic cost on control is assumed as in standard path integral control literature~\cite{thijssen2015path,theodorou2012relative,kappen2005path}.
The goal is to find the optimal control
\begin{equation}
\label{eq:optimal_control_def}
    u^* = \argmin_u \; J(u).
\end{equation}





For path integral control, one can run MC simulations of the system dynamics~\eqref{eq:dynamics} with a given controller \(u\) multiple times, and then compute a weighted average to extract the optimal control. The key idea is to recast the stochastic optimal control problem into a statistical inference problem, where trajectory costs act as exponential weights. Concretely, the optimal control can be estimated via 
\vspace{-22pt}
\begingroup
\setlength{\abovedisplayskip}{-15pt}
\setlength{\abovedisplayshortskip}{0pt}
\small
\begin{equation}
\label{eq:path_integral_control}
\begin{aligned}
    & P_{\text{control}} = u_t + \\
    & \lim _{s \rightarrow t} \frac{\sum\left\{\exp \left[-\int_t^T g\!\left(x_\tau^n, u_\tau^n\right) d \tau-\Psi\!\left(x_T^n\right)\right] \int_t^s d w_\tau^n \;\bigm|\; x_t^n=x\right\}}{(s-t) \sum\left\{\exp \left[-\int_t^T g\!\left(x_\tau^n, u_\tau^n\right) d \tau-\Psi\!\left(x_T^n\right)\right] \;\bigm|\; x_t^n=x\right\}},
\end{aligned}
\normalsize
\end{equation}
\endgroup
where the sum \(\sum\) is taken over \(N\) simulated trajectories, and \(x^n, u^n, w^n\) are realizations of the state, control, and noise for the \(n\)-th trajectory.  
Intuitively, each trajectory is assigned a weight $\exp[-\int_t^T g(x_\tau^n, u_\tau^n)\,d\tau - \Psi(x_T^n)]$, 
which encodes both the running cost \(g(\cdot)\) along the path and the terminal cost \(\Psi(\cdot)\). Trajectories that yield lower cost receive higher weight, while high-cost trajectories are exponentially suppressed. The numerator in~\eqref{eq:path_integral_control} collects weighted fluctuations of the noise increments \(\int_t^s dw_\tau^n\), and the denominator normalizes these contributions to ensure a proper probability-weighted average. We refer readers to~\cite{kappen2005path} for more detailed discussions on path integral control.

Now for \textbf{both problems}, the goal is to efficiently estimate $P_{\text{safe}}$ in~\eqref{eq:safe_prob_est} and $P_{\text{control}}$ in~\eqref{eq:path_integral_control}, for safety probability and path integral control estimation respectively. 
In this work, we consider cases where we have access to multiple models of the ground truth system dynamics~\eqref{eq:dynamics} of possibly different time discretizations, 
with different levels of accuracy and different sampling cost. 
In the following sections, we will introduce a scheme to effectively estimate the safety probability and optimal control by leveraging the access to multiple models and time discretizations.

\section{Multi-level Multi-fidelity Monte Carlo}
\label{sec:mlmf}
In this section, we introduce our proposed multi-level multi-fidelity (MLMF) Monte Carlo framework. We start by introducing multi-level Monte Carlo (MLMC). And then we introduce multi-fidelity Monte Carlo (MFMC) and derive its relationship with MLMC. Finally, we propose the unified MLMF framework.

\subsection{Multi-Level Monte Carlo}
We start by introducing standard MLMC. 
Assume there is a function $\model: \mathbb{R}^n \rightarrow \mathbb{R}$ and a random variable $Z$. 
The goal is to estimate $\model(Z)$. In safety probability estimation, $\model$ is $F$ in~\eqref{eq:safe_prob_def} with dynamics~\eqref{eq:dynamics} and $Z$ is the long-term trajectory $\{x_\tau, \tau \in [t, t+T]\}$. In path integral control, $\model$ is the optimal control transformation~\eqref{eq:optimal_control_def} with dynamics~\eqref{eq:dynamics} and $Z$ is the current state $x$.

Denote $P$ the value of interest to be estimated, and $P_1, P_2, \cdots, P_L$ are estimates from level $1$ to $L$ with increasing accuracy but also increasing cost. The MLMC estimation is given by
\begin{equation}
\label{eq:MLMC_estimator}
    \mathbb{E}[P] = \mathbb{E}[P_1] + \sum_{l=2}^L \mathbb{E}[P_l - P_{l-1}],
\end{equation}
and numerically we can get an estimate through
\begin{equation}
    Y_{\text{MLMC}} = N_1^{-1} \sum_{n=1}^{N_1}P_1^n + \sum_{l=2}^L \left\{N_l^{-1} \sum_{n=1}^{N_l}(P_l^n - P_{l-1}^n)\right\},
\end{equation}
where $N_l$ is the number of samples at level $l$, and $P_l^{n}$ is the estimate of $P_l$ with the $n$-th sample. Note that the same noise realizations are reused among all layers, so that $P_l$ are coupled for reduced estimation variance.


Assume the sampling cost for each layer $l$ is given by $C_l$, and the variance of the estimate at each layer is given by $V_l$. It is shown that for mean square error $\epsilon^2$, the sample complexity of MLMC is bounded by $O(\epsilon^{-2})$ when $C_l$ is exponentially increasing and $V_l$ is exponentially decreasing with regard to level $l$~\cite[Theorem 1]{giles2015multilevel}.

In previous literature, MLMC is used in time domain with different levels of discretization $\Delta t_l$ for each level $l$.
However, to our best knowledge, no work has considered MLMC with different levels of state representation. In our proposed framework to be introduced later, we will incorporate different state representations into the MLMC scheme.


\subsection{Multi-Fidelity Monte Carlo}

In multi-fidelity Monte Carlo (MFMC), we have a high-fidelity model $\model_L$, and lower-fidelity surrogate models $\model_1, \ldots, \model_{L-1}$. The high-fidelity model $\model_L$ is the ground truth model.\footnote{In MFMC literature (\eg\cite{peherstorfer2018multifidelity}) it is usually assumed that $\model_1$ is the high-fidelity model. Here we reverse the order to match with the formulation of MLMC.}
Assume the cost of evaluating a model $\model_l$ is $C_l$ for $l=1, \ldots, L$.
One evaluates model $\model_l$ for $N_l$ times and we have $N_1 \geq N_2 \geq \cdots \geq N_L > 0$.\footnote{This conditions ensures the first $N_l$ samples can be reused for level $l$ in~\eqref{eq:mfmc_estimator}, similar as in~\cite{peherstorfer2018multifidelity}.} Denote $\boldsymbol{z}_1, \ldots, \boldsymbol{z}_{N_l}$ the $N_l$ realizations of a random variable $Z$, and evaluate model $\model_l$ at the realizations to obtain the outputs $\model_l\left(\boldsymbol{z}_1\right), \ldots, \model_l\left(\boldsymbol{z}_{N_l}\right)$
for $l=1, \ldots, L$. Define the Monte Carlo estimators
\begin{equation}
\label{eq:mc_estimators_mf}
\begin{aligned}
    {Y}^{(N_l)}_l=\frac{1}{N_l} \sum_{j=1}^{N_l} \model_l\left(\boldsymbol{z}_j\right), & \quad
    {Y}^{(N_{l+1})}_l=\frac{1}{N_{l+1}} \sum_{j=1}^{N_{l+1}} \model_l\left(\boldsymbol{z}_j\right), \\
    {Y}^{(N_L)} & = \frac{1}{N_L} \sum_{j=1}^{N_L} \model_L\left(\boldsymbol{z}_j\right),
\end{aligned}
\end{equation}
for $l=1, \ldots, L-1$.
Note that the first $N_{l+1}$ samples are reused for ${Y}^{(N_l)}_l$ and ${Y}^{(N_{l+1})}_l$, which enables coupling and is the key to variance reduction.
The MFMC method combines the MC estimators~\eqref{eq:mc_estimators_mf} into the following estimator
\begin{equation}
\label{eq:mfmc_estimator}
    Y_{\text{MFMC}}={Y}^{(N_L)}+\sum_{l=1}^{L-1} a_l\left({Y}^{(N_l)}_l-{Y}^{(N_{l+1})}_l\right).
\end{equation}
The MFMC estimator~\eqref{eq:mfmc_estimator} depends on the coefficients $a_1, \ldots, a_{L-1} \in \mathbb{R}$ and on the number of model evaluations $N_1, \ldots, N_L$. We define the costs of the MFMC estimator as
$
C(Y_{\text{MFMC}})=\sum_{l=1}^k N_l C_l,
$
since model $\model_l$ is evaluated $N_l$ times and each evaluation incurs costs $C_l$. The MFMC method selects $N_1, \ldots, N_L$, $a_1, \ldots, a_{L-1}$ such that the costs of the MFMC estimator are equal to a given computational budget $C(Y_{\text{MFMC}})=B$ and the mean square error of the MFMC estimator is minimal. 
With mild assumptions on the cost, the optimal $N_1, \ldots, N_L$, $a_1, \ldots, a_{L-1}$ can be calculated through functions of the cost and correlations coefficients between models~\cite{peherstorfer2016optimal}.

\subsection{Multi-level Multi-fidelity Monte Carlo}
Now, we present our proposed unified framework of multi-level multi-fidelity Monte Carlo (MLMF). 
Let us first consider a unified notation system for MLMC and MFMC. Let $P_l^{(N_l)}$ denote the estimate of $P$ at level $l$ with $N_l$ samples. This way, the MLMC estimator can be written as
\begin{equation}
\label{eq:mlmc_uni}
\begin{aligned}
    Y_{\text{MLMC}} & = N_1^{-1} \sum_{n=1}^{N_1}P_1^n + \sum_{l=2}^L \left\{N_l^{-1} \sum_{n=1}^{N_l}(P_l^n - P_{l-1}^n)\right\} \\
    & = P_1^{(N_1)} + \sum_{l=2}^L \left[P_l^{(N_l)} - P_{l-1}^{(N_l)}\right],
\end{aligned}
\end{equation}
and the MFMC estimator can be written as
\begin{equation}
\label{eq:mfmc_uni}
\begin{aligned}
    Y_{\text{MFMC}} & = P_L^{(N_L)} + \sum_{l=1}^{L-1} a_l \left[P_l^{(N_l)} - P_{l}^{(N_{l+1})}\right].
\end{aligned}
\end{equation}



It is clear to see that $Y_{\text{MLMC}}$ in~\eqref{eq:mlmc_uni} and $Y_{\text{MFMC}}$ in~\eqref{eq:mfmc_uni} share a highly similar structure. For MLMC, coupling in different time discretizations is leveraged through recycled noise realizations, and for MFMC, coupling in different model fidelities is leveraged through reused data samples, both to reduce estimation variance.
Based on these observations, we proposed the following MLMF estimator, to incorporate coupling in time and state spaces in one unified framework.
\begin{equation}
    \label{eq:mlmf}
    \begin{aligned}
    Y_{\text{MLMF}} & = P_L^{(N_L)} + \sum_{l=1}^{L-1} a_l \left[P_l^{(N_l)} - P_{l}^{(N_{l+1})}\right] \\
    & = a_1 P_1^{(N_1)} + \sum_{l=2}^L \left[a_l P_l^{(N_l)} - a_{l-1} P_{l-1}^{(N_l)}\right],
\end{aligned}
\end{equation}
where $a_l$ are parameters and $a_L = 1$. Note that $P_l^{(N_l)}$ is the estimate at level $l$ with possibly different time discretization $\Delta t_l$ similar to MLMC, and different surrogate models $\model_l$ similar to MFMC. 
Furthermore, we will show in Section~\ref{sec:mlmf_application} that this MLMF framework can take estimates from models of low-dimensional features instead of the original system states, to allow further flexibility and possible computation reduction. An illustration of the MLMF framework is shown in Fig.~\ref{fig:mlmf_diagram}

\setlength{\belowcaptionskip}{-4pt}
\begin{figure}[t]
    \centering
    \includegraphics[width=0.49\textwidth]{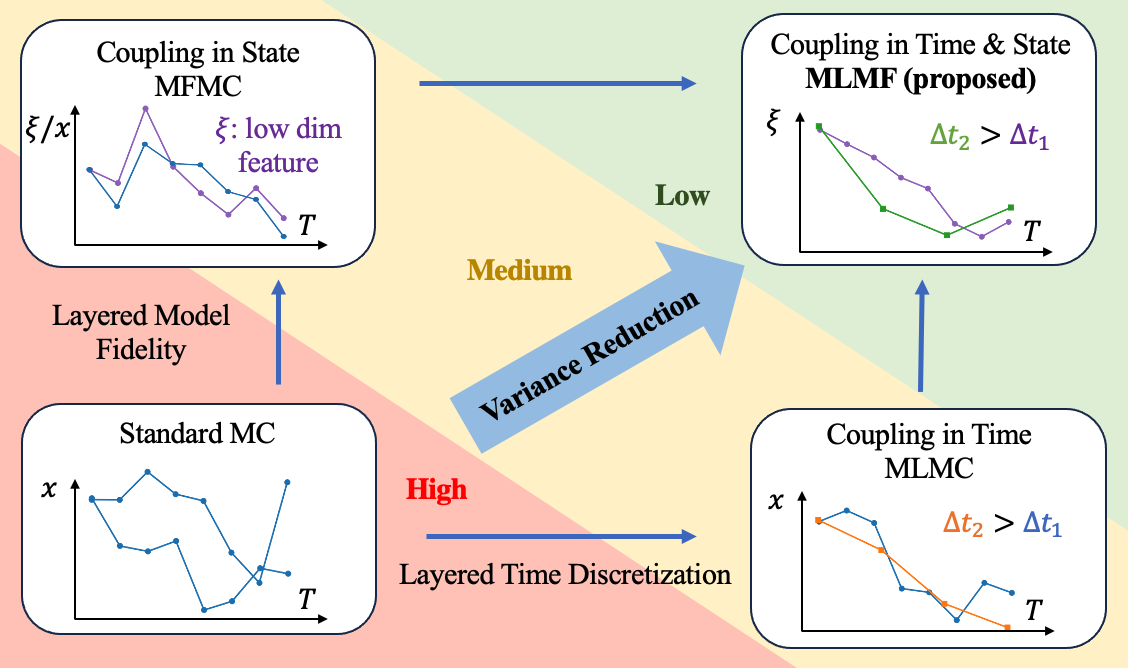}
    \caption{The proposed MLMF framework: coupling both in time and state space.}
    \label{fig:mlmf_diagram}
\end{figure}

\begin{remark}
    The proposed MLMF framework uses layering structure in both time and state space with possible low-dimensional representations, as opposed to MLMC where only different time discretizations are used~\cite{giles2015multilevel} and MFMC where only different system models of same dimensions are used~\cite{peherstorfer2018multifidelity}.
\end{remark}

\begin{remark}
    The parameters $a_L = 1$ will ensure unbiased MLMF estimator under certain conditions, as shown later in Theorem~\ref{thm:unbiased_estimation}. Different configurations of $a_l$ for $l = 1, \cdots L-1$ will result in different variances at each level, which affect the sample complexity of the estimator. Optimal $a_l$ are derived in Theorem~\ref{thm:optimal_coefficients} and the corresponding estimation variance is derived in Theorem~\ref{thm:optimal_variance}.
\end{remark}

\section{Theoretical Analysis}
\label{sec:theoretical_analysis}
In this section, we generalize key theoretical results from MLMC and MFMC literature to our proposed MLMF framework. Specifically, we adapt the control variate estimator, variance reduction, and sample allocation theorems from~\cite{peherstorfer2016optimal,geraci2017multifidelity} to our setting, where surrogate models at different levels may operate on different state spaces but originate from a common full-state representation. In addition, we show such analysis in both the settings where samples from different layers are coupled or not.


We make the following assumptions to facilitate our theoretical analysis.


\begin{assumption}
\label{asm:known_correlation}
    For each level $l$, the correlation coefficient $\rho_{l,L}$ between $P_l$ and $P_L$ is known, and we have 
    \begin{equation}
        |\rho_{L,L}| > |\rho_{L-1,L}| > \cdots > |\rho_{1,L}|.
    \end{equation}
\end{assumption}

\begin{assumption}
\label{asp:descending_cost}
    Evaluation costs $C_l$ for each level $l$ are positive and known, and we have
    \begin{equation}
        \frac{C_{l+1}}{C_l} > \frac{\rho_{l+1,L}^2 - \rho_{l,L}^2}{\rho_{l,L}^2 -  \rho_{l-1,L}^2},    
    \end{equation}
    for $l = 2, \cdots, L-1$.
\end{assumption}

In practice, Assumptions~\ref{asm:known_correlation} and~\ref{asp:descending_cost} can be satisfied by constructing meaningful low-level models $h_l$ that have decreasing correlations with the high-level model $h_L$, and have decreasing sampling cost. The correlation coefficients can be estimated from samples~\cite{giles2015multilevel}. 

In the following theorems and proofs, for $l$-th level estimate $P_l$, we let $V_l$ be the variance, $\sigma_l = \sqrt{V_l}$ be the standard deviation, and $C_l$ be the sampling cost.

\subsection{Unbiasedness of the MLMF Estimator}

We present the following theorem to show that the proposed MLMF estimator is unbiased.

\begin{theorem}
\label{thm:unbiased_estimation}
The MLMF estimator~\eqref{eq:mlmf} is unbiased, \ie $\mathbb{E}[Y_{\text{MLMF}}] = \mathbb{E}[h_L(x)]$.
\end{theorem}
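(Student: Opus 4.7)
The plan is to exploit the observation that the MLMF estimator, when written as in~\eqref{eq:mlmf}, is a linear combination of an anchor term $P_L^{(N_L)}$ and $L-1$ differences of the form $P_l^{(N_l)}-P_l^{(N_{l+1})}$, each of which is a difference of two sample-mean Monte Carlo estimators of the \emph{same} underlying model $h_l$. The entire argument then reduces to verifying two elementary facts: (i) each sample-mean $P_l^{(N)}$ is an unbiased estimator of $\mathbb{E}[P_l]$ regardless of the sample size $N$, and (ii) by linearity of expectation, the $a_l$-weighted differences contribute zero.

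The key steps, in order, are as follows. First, I would invoke linearity of expectation on~\eqref{eq:mlmf} to write
\begin{equation*}
\mathbb{E}[Y_{\text{MLMF}}] \;=\; \mathbb{E}\bigl[P_L^{(N_L)}\bigr] + \sum_{l=1}^{L-1} a_l\Bigl(\mathbb{E}\bigl[P_l^{(N_l)}\bigr] - \mathbb{E}\bigl[P_l^{(N_{l+1})}\bigr]\Bigr).
\end{equation*}
Second, I would observe that for every level $l$ and every sample size $N$, the Monte Carlo average $P_l^{(N)} = N^{-1}\sum_{n=1}^{N} P_l^n$ satisfies $\mathbb{E}[P_l^{(N)}] = \mathbb{E}[P_l]$, since the per-sample realizations $P_l^n$ are identically distributed copies of $P_l$. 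Consequently $\mathbb{E}[P_l^{(N_l)}] - \mathbb{E}[P_l^{(N_{l+1})}] = 0$ for each $l$, so every correction term vanishes term-by-term, independently of the values of $a_1,\ldots,a_{L-1}$. Third, I would conclude that $\mathbb{E}[Y_{\text{MLMF}}] = \mathbb{E}[P_L^{(N_L)}] = \mathbb{E}[P_L] = \mathbb{E}[h_L(x)]$, where the last identity is by the definition of $P_L$ as the MC estimator built from the top-fidelity model $h_L$.

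A minor point worth noting explicitly is that although samples are \emph{coupled} across levels (the first $N_{l+1}$ realizations of the noise/data are reused for both $P_l^{(N_l)}$ and $P_l^{(N_{l+1})}$, and across adjacent fidelities), this coupling affects only the joint distribution of the estimators and hence their variance and covariance, not their marginal means. Therefore linearity of expectation is unaffected by the coupling, and the unbiasedness conclusion survives. I would also flag, via a one-line remark tying back to Remark 4.2 in the excerpt, why the constraint $a_L = 1$ is essential: if one had used $a_L \neq 1$, the anchor term would scale to $a_L\,\mathbb{E}[h_L(x)]$ and bias would be introduced, whereas the coefficients $a_1,\ldots,a_{L-1}$ remain free parameters available for variance optimization in the subsequent theorems.

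The main (mild) obstacle is essentially notational rather than mathematical: one must be careful that the per-sample random variables $P_l^n$ used to form $P_l^{(N_l)}$ and those used to form $P_l^{(N_{l+1})}$ are both legitimately identically distributed to $P_l$, and that the reuse of the first $N_{l+1}$ samples (as specified in the MFMC construction around~\eqref{eq:mc_estimators_mf}) does not violate the identical-distribution property. Once this is spelled out cleanly, the proof is a two-line application of linearity plus a telescoping (or, even more directly, term-by-term) cancellation; there is no delicate estimate to control.
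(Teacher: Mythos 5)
Your proof is correct and follows essentially the same route as the paper's: linearity of expectation followed by term-by-term cancellation of the differences $\mathbb{E}[P_l^{(N_l)}]-\mathbb{E}[P_l^{(N_{l+1})}]$. In fact your write-up is more careful than the paper's one-line version, since you explicitly justify why each sample mean is unbiased for $\mathbb{E}[P_l]$ and why the cross-level coupling affects only variances, not marginal means.
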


\begin{proof}
By linearity of expectation,
\begin{equation}
\begin{aligned}
    \mathbb{E}[Y_{\text{MLMF}}] & = \mathbb{E}[P^{(N_L)}] + \sum_{l=1}^{L-1} a_l (\mathbb{E}[P_l^{(N_l)}] - \mathbb{E}[P_{l}^{(N_{l+1})}]) \\
    & = \mathbb{E}[h_L(x)].  
\end{aligned}
\end{equation}
\end{proof}

\subsection{Optimal Coefficients}
We present the following theorem to derive the optimal coefficients $a_l^*$ for variance reduction. 

\begin{theorem}
\label{thm:optimal_coefficients}
Assume Assumptions~\ref{asm:known_correlation} and~\ref{asp:descending_cost}  hold. The optimal weights $a_l^*$ minimizing the variance of $Y_{\text{MLMF}}$ are given by
\begin{equation}
\label{eq:optimal_coefficient}
    a_l^* = \rho_{l,L}\cdot\frac{\sigma_L}{\sigma_l},
\end{equation}
where $\rho_{l,L}$ is the Pearson correlation coefficient between $P_l$ and $P_L$, and $\sigma_l$ is the estimate standard deviation at level $l$.
\end{theorem}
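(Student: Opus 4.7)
The plan is to unfold $\Var(Y_{\text{MLMF}})$ into a sum of $L-1$ uncoupled convex quadratics, one per coefficient $a_l$, and then read off the minimizer of each. First I would adopt the shorthand $\tilde Y_l := P_l^{(N_l)}$, $\tilde Y_l' := P_l^{(N_{l+1})}$ for $l=1,\ldots,L-1$, and $\tilde Y_L := P_L^{(N_L)}$, so that the estimator in~\eqref{eq:mlmf} becomes $Y_{\text{MLMF}} = \tilde Y_L + \sum_{l=1}^{L-1} a_l (\tilde Y_l - \tilde Y_l')$. The entire argument then runs at the level of the second-moment structure of these nested sample averages.

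Next I would establish the basic covariance identity
\begin{equation*}
\Cov(\tilde Y_l, \tilde Y_{l'}) \;=\; \frac{\rho_{l,l'}\,\sigma_l\,\sigma_{l'}}{\max(N_l, N_{l'})},
\end{equation*}
which follows immediately from the i.i.d.\ sampling and the fact that the index set used for $\tilde Y_l$ is nested in that of $\tilde Y_{l'}$ whenever $N_l \leq N_{l'}$. Specializing this identity, I would read off $\Var(\tilde Y_l - \tilde Y_l') = \sigma_l^2(1/N_{l+1}-1/N_l)$ and $\Cov(\tilde Y_L, \tilde Y_l - \tilde Y_l') = -\rho_{l,L}\sigma_l\sigma_L(1/N_{l+1}-1/N_l)$ by direct substitution.

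The critical step, and the one I expect to be the main obstacle, is to show that the different control-variate terms do not interact, i.e.\ $\Cov(\tilde Y_l - \tilde Y_l',\ \tilde Y_{l'} - \tilde Y_{l'}') = 0$ for $l \neq l'$. Taking $l<l'$ and invoking the ordering $N_{l+1} \geq N_{l'} \geq N_{l'+1}$ (from the nested-allocation convention $N_1 \geq \cdots \geq N_L$), all four pairwise covariances in the bilinear expansion simplify via the identity above and collapse into two pairs, one equal to $\rho_{l,l'}\sigma_l\sigma_{l'}/N_l$ and the other to $\rho_{l,l'}\sigma_l\sigma_{l'}/N_{l+1}$, which cancel in the signed sum. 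The only delicate part is tracking which $\max$ to pick in each entry; once the ordering of the $N_l$'s is applied, the cancellation is mechanical.

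With no cross-level coupling, the variance decomposes as
\begin{equation*}
\Var(Y_{\text{MLMF}}) \;=\; \frac{\sigma_L^2}{N_L} + \sum_{l=1}^{L-1}\Big(\frac{1}{N_{l+1}}-\frac{1}{N_l}\Big)\bigl[a_l^2\sigma_l^2 - 2\,a_l\,\rho_{l,L}\sigma_l\sigma_L\bigr],
\end{equation*}
which is a sum of independent strictly convex quadratics in the $a_l$ (the prefactor $1/N_{l+1}-1/N_l$ is nonnegative, and strictly positive precisely when $a_l$ actually affects the estimator). Setting $\partial/\partial a_l = 0$ in each term yields the closed-form unique minimizer $a_l^* = \rho_{l,L}\sigma_L/\sigma_l$ claimed in~\eqref{eq:optimal_coefficient}, completing the proof. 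Assumptions~\ref{asm:known_correlation} and~\ref{asp:descending_cost} are invoked only to guarantee that the $\rho_{l,L}$ used in the minimizer are well defined and the ordering of the $N_l$'s is compatible with the forthcoming sample-allocation optimization in Theorem~\ref{thm:optimal_variance}.
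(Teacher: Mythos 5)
Your proposal is correct and lands on exactly the same per-level quadratic $\bigl(\tfrac{1}{N_{l+1}}-\tfrac{1}{N_l}\bigr)\bigl[a_l^2\sigma_l^2 - 2a_l\rho_{l,L}\sigma_l\sigma_L\bigr]$ that the paper minimizes, so the core computation coincides. The one genuine difference is that the paper sidesteps the joint optimization entirely: it defines a single-term surrogate $Y_{\text{MLMF}}^l = P_L^{(N_L)} + a_l\bigl[P_l^{(N_l)} - P_l^{(N_{l+1})}\bigr]$ and minimizes its variance level by level, without ever arguing that the $L-1$ coefficients can be optimized independently in the full estimator. You supply precisely that missing justification via the covariance identity $\Cov(\tilde Y_l,\tilde Y_{l'}) = \rho_{l,l'}\sigma_l\sigma_{l'}/\max(N_l,N_{l'})$ and the resulting cancellation $\Cov\bigl(\tilde Y_l-\tilde Y_l',\,\tilde Y_{l'}-\tilde Y_{l'}'\bigr)=0$ for $l\neq l'$, which is the standard MFMC decoupling argument of Peherstorfer et al. So your write-up is, if anything, the more complete proof; just note that it relies on the nested-sample (coupled) convention $N_1\ge\cdots\ge N_L$ with shared realizations, which is also implicit in the paper's use of $\Var\bigl(P_l^{(N_l)}-P_l^{(N_{l+1})}\bigr)=V_l\bigl(\tfrac{1}{N_{l+1}}-\tfrac{1}{N_l}\bigr)$, and that when $N_l=N_{l+1}$ the corresponding $a_l$ is not identifiable (the term drops out), so uniqueness of the minimizer holds only for the levels where the prefactor is strictly positive.
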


\begin{proof}
Consider the MLMF estimator at fixed level $l\in\{1,\dots,L-1\}$ defined as follow:
\begin{equation}
    Y_{\text{MLMF}}^l := P_L^{(N_L)} + a_l \left[P_l^{(N_l)} - P_{l}^{(N_{l+1})}\right].
\end{equation}
Its variance is given by
\begin{equation}
\begin{aligned}
    \Var[Y_{\text{MLMF}}^l] = & \Var\bigl(P_L^{(N_L)}\bigr) 
    + a_l^2\,\Var\bigl(P_l^{(N_l)} - P_l^{(N_{l+1})}\bigr) \\
    & + 2a_l\,\Cov\bigl(P_L^{(N_L)},\, P_l^{(N_l)} - P_l^{(N_{l+1})}\bigr).
\end{aligned}
\end{equation}
From standard Monte Carlo properties~\cite{giles2015multilevel} we get
\begin{equation}
    \Var\bigl(P_L^{(N_L)}\bigr) \;=\; \frac{V_L}{N_L}, 
\end{equation}
\begin{equation}
    \Var\bigl(P_l^{(N_l)} - P_l^{(N_{l+1})}\bigr) = V_l\left(\frac{1}{N_{l+1}} - \frac{1}{N_l}\right),
\end{equation}
\begin{equation}
    \Cov \bigl(P_L^{(N_L)},\, P_l^{(N_l)} - P_l^{(N_{l+1})}\bigr) 
    = \sigma_l \sigma_L \Big(\frac{\rho_{l,L}}{N_l} - \frac{\rho_{l,L}}{N_{l+1}}\Big).
\end{equation}
Hence,
\begin{equation}
    \Var[Y_{\text{MLMF}}^l] = \frac{V_L}{N_L}
    + \left(\frac{1}{N_{l+1}} - \frac{1}{N_l}\right)
         \Bigl(a_l^2 V_l - 2a_l\, \sigma_l \sigma_L\,\rho_{l,L}\Bigr).
\end{equation}
Since $N_l \ge N_{l+1}$, the prefactor is nonnegative, and minimizing the quadratic term in $a_l$ yields
\begin{equation}
    a_l^* \;=\; \frac{\sigma_l \sigma_L\, \rho_{l,L}}{V_l}.
\end{equation}
Since $V_l=\sigma_l^2$, we have
\begin{equation}
    a_l^* \;=\; \rho_{l,L}\,\frac{\sigma_L}{\sigma_l}.
\end{equation}
\end{proof}

\subsection{Variance Reduction and Consistent Estimation}

We present the following theorem to derive the estimation variance under the optimal coefficients $a^*_l$, and show that the proposed MLMF estimator is consistent. We first define the covariance between different levels under $a^*_l$ as follow.
\begin{equation}
\label{eq:cross_variance}
\begin{aligned}
    V_{\text{cross}} := & \; -2 \sum_{l=1}^{L-1} a^*_l \Cov \bigl(P_L^{(N_L)},\, P_l^{(N_l)} - P_l^{(N_{l+1})}\bigr) \\
    = & \;  2 V_L \sum_{l=1}^{L-1}  \rho_{l,L}^2 \left(\frac{1}{N_{l}} - \frac{1}{N_{l+1}}\right),
\end{aligned}
\end{equation}
where $\rho_{l,L}$ is the Pearson correlation coefficient between $P_l$ and $P_L$.

\begin{theorem}
\label{thm:optimal_variance}
Assume Assumptions~\ref{asm:known_correlation} and~\ref{asp:descending_cost}  hold.
Using optimal coefficients $a_l^*$ in~\eqref{eq:optimal_coefficient}, the variance of the MLMF estimator is
\begin{equation}
\label{eq:optimal_variance}
    \Var^*[Y_{\text{MLMF}}] = \sum_{l=1}^L \frac{R_l}{N_l},
\end{equation}
where $R_l$ is the variance contribution of each level $l$ defined as follows.
\begin{equation}
\label{eq:variance_contribution}
    R_l := \begin{cases}
        V_L (\rho_{l, L}^2 - \rho_{l-1, L}^2),\text{ if coupled } P_l^{(N_l)}, P_l^{(N_{l+1})} \\
        V_L (\rho_{l, L}^2 + \rho_{l-1, L}^2),\text{ if independent }P_l^{(N_l)}, P_l^{(N_{l+1})}
    \end{cases}
\end{equation}
with $\rho_{0, L} = 0$.
\end{theorem}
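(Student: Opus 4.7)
The plan is to expand $\Var[Y_{\text{MLMF}}]$ by bilinearity, dispose of every cross-level covariance, and then apply the single-level identities from the proof of Theorem~\ref{thm:optimal_coefficients} together with an Abel-summation rearrangement to reach the claimed form.

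First, I would write
\begin{equation*}
\begin{aligned}
\Var[Y_{\text{MLMF}}] = \; & \Var\bigl[P_L^{(N_L)}\bigr] + \sum_{l=1}^{L-1} a_l^{2}\, \Var\bigl[P_l^{(N_l)} - P_l^{(N_{l+1})}\bigr] \\
& + 2\sum_{l=1}^{L-1} a_l\, \Cov\bigl[P_L^{(N_L)},\, P_l^{(N_l)} - P_l^{(N_{l+1})}\bigr] \\
& + 2\!\!\!\sum_{1 \le l < k \le L-1}\!\!\! a_l a_k\, C_{l,k},
\end{aligned}
\end{equation*}
with $C_{l,k}:=\Cov\bigl[P_l^{(N_l)}-P_l^{(N_{l+1})},\,P_k^{(N_k)}-P_k^{(N_{k+1})}\bigr]$. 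The key step is to show $C_{l,k}=0$ whenever $l\ne k$. In the independent regime this is immediate. In the coupled (nested-sample) regime, I would decompose $P_l^{(N_l)}-P_l^{(N_{l+1})}$ into a weighted sum over the shared index block $\{1,\dots,N_{l+1}\}$ and the exclusive block $\{N_{l+1}+1,\dots,N_l\}$, and exploit $N_{l+1}\ge N_k$ so that the exclusive block is disjoint from every index used at level $k$; a short computation then shows the shared-block contribution telescopes to exactly zero.

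Next, I would substitute the single-level variance and covariance identities already obtained in the proof of Theorem~\ref{thm:optimal_coefficients} together with the optimal weights $a_l^{*}=\rho_{l,L}\sigma_L/\sigma_l$. In the coupled case the combination $a_l^{*2} V_l - 2 a_l^{*}\sigma_l\sigma_L\rho_{l,L} = -\rho_{l,L}^{2} V_L$ yields a per-level contribution of $\rho_{l,L}^{2} V_L(1/N_l - 1/N_{l+1})$, whereas in the independent case the cross-covariance with $P_L^{(N_L)}$ vanishes and the per-level contribution becomes $\rho_{l,L}^{2} V_L(1/N_l + 1/N_{l+1})$. Adding $\Var[P_L^{(N_L)}] = V_L/N_L$ then expresses $\Var^{*}[Y_{\text{MLMF}}]$ as a sum indexed by consecutive levels.

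Finally, I would regroup by $N_l$ via Abel summation under the conventions $\rho_{0,L}=0$ and $\rho_{L,L}=1$. In the coupled case the sum collapses to $\sum_{l=1}^{L} V_L(\rho_{l,L}^{2}-\rho_{l-1,L}^{2})/N_l$; in the independent case the two $1/N_l$ and $1/N_{l+1}$ blocks combine to $\sum_{l=1}^{L} V_L(\rho_{l,L}^{2}+\rho_{l-1,L}^{2})/N_l$, matching~\eqref{eq:variance_contribution}. The main obstacle is establishing the vanishing of $C_{l,k}$ in the coupled regime, which requires careful bookkeeping of how the sample-index blocks overlap across levels; every other step is either a substitution from the earlier theorem or a standard telescoping rearrangement.
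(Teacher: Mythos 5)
Your proposal is correct and follows the same backbone as the paper's proof: expand the variance of the estimator, substitute the single-level variance and covariance identities already derived in the proof of Theorem~\ref{thm:optimal_coefficients}, insert $a_l^\ast=\rho_{l,L}\sigma_L/\sigma_l$, and regroup by $1/N_l$ via Abel summation with $\rho_{0,L}=0$ and $\rho_{L,L}=1$; both routes land on the stated $R_l$ in both regimes. Where you genuinely go beyond the paper is the explicit treatment of the cross-level terms $C_{l,k}$ for $l\neq k$: the paper's decomposition keeps only $\Var[P_L^{(N_L)}]$, the per-level variances, and the covariances with $P_L^{(N_L)}$ (collected in $V_{\text{cross}}$), and silently discards the remaining pairwise covariances. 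Your block argument --- the weights of $P_k^{(N_k)}-P_k^{(N_{k+1})}$ over its index set sum to zero while the level-$l$ weight is constant on that set because $N_k\le N_{l+1}$ --- is exactly the justification needed, and it does work out to $C_{l,k}=0$ in the nested-sample regime. You are also right that the independent case requires the covariance with $P_L^{(N_L)}$ to vanish as well: if one keeps the coupled $V_{\text{cross}}$ of~\eqref{eq:cross_variance} and only sets $\Cov(P_l^{(N_l)},P_l^{(N_{l+1})})=0$, the regrouped coefficient comes out as $V_L(3\rho_{l,L}^2-\rho_{l-1,L}^2)$ rather than the stated $V_L(\rho_{l,L}^2+\rho_{l-1,L}^2)$, so full independence across levels is implicitly assumed there; your reading makes this explicit. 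In short, same approach, with two small gaps in the paper's argument filled in.
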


\begin{proof}
From variance of sum, we know that the variance of the MLMF estimator can be written out as
\begin{equation}
\begin{aligned}
    & \;\text{Var}^*[Y_{\text{MLMF}}] \\
    = & \;\text{Var}\left[ P_L^{(N_L)} \right] + \sum_{l=1}^{L-1} \text{Var}[a_l^*(P_l^{(N_l)} - P_l^{(N_{l+1})})] - V_{\text{cross}},
\end{aligned}
\end{equation}
where $V_{\text{cross}}$ is the covariance between different levels defined in~\eqref{eq:cross_variance}.
Since $\text{Var}\left[P_L^{(N_L)} \right] = \frac{\sigma_L^2}{N_L}$, and from variance of sum we get
\begin{equation}
\begin{aligned}
    & \Var[a_l^*(P_l^{(N_l)} - P_l^{(N_{l+1})})] \\ = \; &(a_l^*)^2 \Var[P_l^{(N_l)}] +(a_l^*)^2 \Var[P_l^{(N_{l+1})}] \\
    & \quad- 2 (a_l^*)^2 \text{Cov}(P_l^{(N_l)}, P_l^{(N_{l+1})}),
\end{aligned}
\end{equation}
which gives
\begin{equation}
\label{eq:optimal_variance_nonsimplify}
\begin{aligned}
\text{Var}^*[Y_{\text{MLMF}}] &=
\frac{\sigma_L^2}{N_L} + \sum_{l=1}^{L-1} (a_l^*)^2 \Big[\frac{\sigma_l^2}{N_l} + \frac{\sigma_l^2}{N_{l+1}} \\
&\quad - 2 \Cov(P_l^{(N_l)}, P_l^{(N_{l+1})})\Big] - V_{\text{cross}},
\end{aligned}
\end{equation}
where $V_{\text{cross}}$ is defined in~\eqref{eq:cross_variance}.

Note that if samples are shared across the first $N_{l+1}$ evaluations for level $l$ as in~\cite{peherstorfer2016optimal}, 
\begin{equation}
\label{eq:covariance_coupled}
    \Cov(P_l^{(N_l)}, P_l^{(N_{l+1})}) = \frac{\sigma_l^2}{N_{l}},
\end{equation} 
otherwise if independent samples are used for $P_l^{(N_l)}$ and $P_l^{(N_{l+1})}$, 
\begin{equation}
\label{eq:covariance_noncoupled}
    \Cov(P_l^{(N_l)}, P_l^{(N_{l+1})}) = 0.
\end{equation}
These two cases correspond to coupling or not for sampling in MLMF. For the two cases, plugging~\eqref{eq:covariance_coupled} or~\eqref{eq:covariance_noncoupled} into~\eqref{eq:optimal_variance_nonsimplify}, and leveraging the definition of $R_l$ in~\eqref{eq:variance_contribution} yields~\eqref{eq:optimal_variance}.
\end{proof}

Note that without coupling, we have $R_l = V_L (\rho_{l, L}^2 + \rho_{l-1, L}^2)$ in~\eqref{eq:variance_contribution}, which results in higher variance compared to standard MC:
\begin{equation}
    \Var^*[Y_{\text{MLMF}}] = V_L \sum_{l=1}^L \frac{\rho_{l, L}^2 + \rho_{l-1, L}^2}{N_l} \geq \frac{V_L}{N_L} = \Var[Y_{\text{MC}}].
\end{equation}
This emphasizes that coupling is essential for variance reduction in the MLMF estimator.

\begin{corollary}
    The proposed MLMF estimator with optimal coefficients $a^*_l$ is consistent, \ie as $N_l \rightarrow \infty$ for all $l = 1, \cdots L$, $\Var^*[Y_{\text{MLMF}}] \rightarrow 0$.
\end{corollary}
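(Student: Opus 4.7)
The plan is to derive the corollary as an immediate consequence of Theorem~\ref{thm:optimal_variance}, which already provides a closed-form expression for the optimal variance as a finite sum of terms of the form $R_l/N_l$. The only content of the corollary beyond that theorem is to confirm that this finite sum vanishes in the joint limit $N_l\to\infty$ for all $l$, so the argument is essentially a bookkeeping step rather than a new technical result.

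First, I would invoke Theorem~\ref{thm:optimal_variance} to write
\begin{equation*}
    \Var^*[Y_{\text{MLMF}}] \;=\; \sum_{l=1}^{L} \frac{R_l}{N_l},
\end{equation*}
with $R_l$ defined in~\eqref{eq:variance_contribution}. Next I would argue that each $R_l$ is a finite constant independent of the sample sizes: in both the coupled and the independent cases, $R_l$ is a linear combination of $V_L$ and the squared correlation coefficients $\rho_{l,L}^2$ and $\rho_{l-1,L}^2$. Under the standing Assumption~\ref{asm:known_correlation} (together with the implicit assumption that $V_L$ is finite, which is standard in the MFMC/MLMC literature cited in the paper), one has $|\rho_{l,L}|\le 1$ and hence $|R_l|\le 2V_L<\infty$.

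Since $L$ is fixed and each coefficient $R_l$ is a finite constant, taking $N_l\to\infty$ for every $l\in\{1,\dots,L\}$ gives $R_l/N_l\to 0$ term by term. The finite sum of terms tending to zero then tends to zero, yielding $\Var^*[Y_{\text{MLMF}}]\to 0$, which is the definition of consistency in mean square.

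I do not anticipate a genuine obstacle here; the only small wrinkle is making explicit the boundedness of $R_l$ in the independent case (where the sign flips), so that the argument does not silently rely on Assumption~\ref{asp:descending_cost} or on the optimal telescoping structure that holds only in the coupled case. Handling the two cases uniformly via the bound $|R_l|\le 2V_L$ avoids this subtlety, and the corollary then follows directly.
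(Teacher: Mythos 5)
Your proposal is correct and follows essentially the same route as the paper: both invoke Theorem~\ref{thm:optimal_variance} and observe that each term $R_l/N_l$ of the finite sum vanishes as $N_l \to \infty$. The extra bookkeeping you add (bounding $|R_l| \le 2V_L$ via $|\rho_{l,L}| \le 1$) is a harmless refinement the paper leaves implicit.
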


\begin{proof}
    From Theorem~\ref{thm:optimal_variance} we know that under $a^*_l$, each term of $\Var^*[Y_{\text{MLMF}}]$ in~\eqref{eq:optimal_variance} will converge to $0$ as $N_l \rightarrow \infty$.
\end{proof}



\subsection{Optimal Sample Allocation}

We present the following theorem to derive the optimal sample allocation under cost constraints.

\begin{theorem}
\label{thm:optimal_sample_alloc}
Given a total computational budget $B$, the optimal number of samples $N_l$ that minimize variance under cost constraint $\sum_{l=1}^L N_l C_l \leq B$ are given by:
\begin{equation}
N_l^* = \frac{B}{\sum_{j=1}^L \sqrt{C_j R_j}} \cdot \sqrt{\frac{R_l}{C_l}},
\end{equation}
where $R_l$ is the variance contribution at level $l$ defined in~\eqref{eq:variance_contribution}.
\end{theorem}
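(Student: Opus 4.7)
The plan is to treat this as a constrained convex optimization problem and apply Lagrange multipliers. By Theorem~\ref{thm:optimal_variance}, the objective is $\Var^*[Y_{\text{MLMF}}] = \sum_{l=1}^L R_l/N_l$, which is strictly convex and strictly decreasing in each $N_l > 0$. Hence the budget constraint must be active at any optimum, allowing me to replace $\sum_l N_l C_l \le B$ with equality. I will also relax $N_l$ to positive reals; the resulting non-integer allocation can be rounded up at negligible cost, as is standard in the MLMF/MFMC literature~\cite{peherstorfer2016optimal}.

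First I would form the Lagrangian
\begin{equation*}
\mathcal{L}(N_1,\ldots,N_L,\lambda) = \sum_{l=1}^L \frac{R_l}{N_l} + \lambda\Bigl(\sum_{l=1}^L N_l C_l - B\Bigr),
\end{equation*}
and solve the stationarity condition $\partial \mathcal{L}/\partial N_l = -R_l/N_l^2 + \lambda C_l = 0$. This yields $N_l = \sqrt{R_l/(\lambda C_l)}$, so the optimal allocation is proportional to $\sqrt{R_l/C_l}$ independently of $\lambda$, which is already the most informative part of the statement.

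Next I would pin down $\lambda$ by substituting these candidates into the binding budget constraint:
\begin{equation*}
B = \sum_{l=1}^L C_l\sqrt{\frac{R_l}{\lambda C_l}} = \frac{1}{\sqrt{\lambda}}\sum_{l=1}^L \sqrt{C_l R_l},
\end{equation*}
which gives $1/\sqrt{\lambda} = B\big/\sum_{j=1}^L \sqrt{C_j R_j}$. Plugging this back into $N_l = \sqrt{R_l/(\lambda C_l)}$ produces precisely the claimed expression for $N_l^*$. Global optimality then follows from the KKT sufficiency theorem, since the objective is strictly convex on the positive orthant and the feasible set is convex with nonempty interior.

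I do not foresee any substantive obstacle: the argument is essentially a one-shot Lagrangian calculation on a convex program. The only minor subtleties are verifying that the constraint is binding (immediate from monotonicity of the objective in each $N_l$) and handling integer rounding (standard). A completely equivalent single-line route would use Cauchy--Schwarz, namely $\bigl(\sum_l \sqrt{R_l C_l}\bigr)^2 \le \bigl(\sum_l R_l/N_l\bigr)\bigl(\sum_l N_l C_l\bigr)$, which gives the sharp lower bound $\Var^* \ge \bigl(\sum_l \sqrt{R_l C_l}\bigr)^2/B$ with equality exactly at the stated $N_l^*$; this could be used as an alternative or as a verification of the Lagrangian derivation.
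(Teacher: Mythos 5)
Your proposal is correct and follows essentially the same route as the paper: form the Lagrangian for the variance expression from Theorem~\ref{thm:optimal_variance}, solve the stationarity condition to get $N_l \propto \sqrt{R_l/C_l}$, and eliminate $\lambda$ via the binding budget constraint. The extra care you take (convexity/KKT sufficiency, the binding-constraint argument, and the Cauchy--Schwarz verification) goes beyond what the paper writes down but does not change the argument.
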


\begin{proof}
From Theorem~\ref{thm:optimal_variance} we know that the optimal variance with regard to $a^*_l$ is given by~\eqref{eq:optimal_variance}. We define the Lagrangian as follows
\begin{equation}
    \mathcal{L} = \text{Var}^*[Y_{\text{MLMF}}] + \lambda \left( \sum_{l=1}^L N_l C_l - B \right).
\end{equation}
Then we set the partial derivatives to zero and get
\begin{equation}
\label{eq:set_deriv_zero}
    \frac{\partial \mathcal{L}}{\partial N_l} = -\frac{\partial \text{Var}^*[Y_{\text{MLMF}}]}{\partial N_l} + \lambda C_l = 0,
\end{equation}
where
\begin{equation}
    \frac{\partial \text{Var}^*[Y_{\text{MLMF}}]}{\partial N_l} = \frac{R_l}{N_l^2}.
\end{equation}
Solving~\eqref{eq:set_deriv_zero} yields
\begin{equation}
    N_l^2 = \frac{R_l}{\lambda C_l} \quad \Rightarrow \quad  N_l^* = \sqrt{\frac{R_l}{\lambda C_l}},
\end{equation}
and enforcing cost constraint gives
\begin{equation}
    \sum_l N_l^* C_l = \sum_l \sqrt{\frac{R_l C_l}{\lambda}} = B,
\end{equation}
which gives
\begin{equation}
    \lambda = \left( \frac{\sum_l \sqrt{R_l C_l}}{B} \right)^2.
\end{equation}
Thus,
\[
N_l^* = \frac{B}{\sum_{j=1}^L \sqrt{R_j C_j}} \cdot \sqrt{\frac{R_l}{C_l}}.
\]
\end{proof}

Now we derive the following corollary, to demonstrate that the proposed MLMF estimator has reduced sample complexity compared to standard MC which has cost $O(\epsilon^{-3})$~\cite{giles2015multilevel}.

\begin{corollary}
    The proposed MLMF estimator with optimal coefficients $a^*_l$ has optimal sample complexity $O(\epsilon^{-2})$ for any error tolerance $\epsilon>0$.
\end{corollary}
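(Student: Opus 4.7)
The plan is to chain together three results already proved in the paper: the unbiasedness from Theorem~\ref{thm:unbiased_estimation}, the closed-form optimal variance from Theorem~\ref{thm:optimal_variance}, and the optimal sample allocation from Theorem~\ref{thm:optimal_sample_alloc}. Since $Y_{\text{MLMF}}$ is unbiased, its mean-square error equals its variance, so to guarantee MSE at most $\epsilon^{2}$ it suffices to find the smallest total budget $B$ for which $\Var^{*}[Y_{\text{MLMF}}]\leq\epsilon^{2}$.

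The first step is to substitute the optimal allocation $N_{l}^{*}$ from Theorem~\ref{thm:optimal_sample_alloc} into the optimal-variance expression $\sum_{l=1}^{L} R_{l}/N_{l}$ of Theorem~\ref{thm:optimal_variance}. A short algebraic simplification collapses the per-level terms $R_{l}/N_{l}^{*}$ into $\sqrt{R_{l}C_{l}}\cdot\sum_{j}\sqrt{R_{j}C_{j}}/B$, yielding the compact identity
\begin{equation*}
    \Var^{*}[Y_{\text{MLMF}}] \;=\; \frac{1}{B}\left(\sum_{l=1}^{L}\sqrt{R_{l} C_{l}}\right)^{2}.
\end{equation*}
Setting the right-hand side equal to $\epsilon^{2}$ and solving for $B$ produces the minimal budget $B^{*}=\epsilon^{-2}\left(\sum_{l=1}^{L}\sqrt{R_{l} C_{l}}\right)^{2}$.

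Finally, I would argue that under Assumptions~\ref{asm:known_correlation} and~\ref{asp:descending_cost} the variance contributions $R_{l}$ (which depend only on $V_{L}$ and the fixed correlations $\rho_{l,L}$) and the evaluation costs $C_{l}$ are positive constants independent of $\epsilon$, and the number of levels $L$ is fixed by the problem hierarchy. Consequently the bracketed factor is an $O(1)$ constant determined entirely by the model hierarchy, so $B^{*}=O(\epsilon^{-2})$. This recovers the classical MLMC scaling~\cite{giles2015multilevel} and matches the improvement over the standard Monte Carlo cost $O(\epsilon^{-3})$ cited in the statement of the corollary.

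There is no genuine obstacle in the argument; the only subtleties are (i) the reduction from MSE to variance relies on the unbiasedness ensured by fixing $a_{L}=1$ in~\eqref{eq:mlmf}, and (ii) the prefactor $\left(\sum_{l}\sqrt{R_{l}C_{l}}\right)^{2}$ is $\epsilon$-independent precisely because $R_{l}$ and $C_{l}$ are intrinsic to the fidelity hierarchy rather than tuning parameters that scale with the target accuracy. The net result is purely an algebraic consequence of the three preceding theorems.
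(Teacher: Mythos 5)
Your proposal is correct and follows essentially the same route as the paper's own proof: both reduce MSE to variance via Theorem~\ref{thm:unbiased_estimation}, substitute the allocation $N_l^*$ from Theorem~\ref{thm:optimal_sample_alloc} into the variance formula of Theorem~\ref{thm:optimal_variance} to obtain $\Var^{*}[Y_{\text{MLMF}}]=\bigl(\sum_{l}\sqrt{R_{l}C_{l}}\bigr)^{2}/B$, and conclude $B=O(\epsilon^{-2})$ from the $\epsilon$-independence of the prefactor. Your added remarks on why $R_l$ and $C_l$ are $\epsilon$-independent make the constant-prefactor step slightly more explicit than the paper's, but the argument is the same.
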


\begin{proof}
By Theorem~\ref{thm:unbiased_estimation}, the MLMF estimator is unbiased, so achieving mean square error $\epsilon^2$ is equivalent to achieving $\Var(Y_{\text{MLMF}})\le\epsilon^2$.
By Theorem~\ref{thm:optimal_variance} and Theorem~\ref{thm:optimal_sample_alloc}, we know that with the optimal sample allocation and optimal $a^*_l$ at each level, the optimal estimation variance is given by
\begin{equation}
    \sum_{l=1}^L \frac{R_l}{N_l^*} = \sum_{l=1}^L \frac{R_l {\sum_{j=1}^L \sqrt{R_j C_j}}}{B} \sqrt{\frac{C_l}{R_l}} = \frac{ (\sum_{l=1}^L \sqrt{R_l C_l})^2}{B}.
\end{equation}
Since $\sum_{l=1}^L \sqrt{R_l C_l}$ is a constant, the computation cost to achieve $\epsilon^2$ variance is on the order of $B \sim O(\epsilon^{-2})$.

\end{proof}

\section{Risk and Optimal Control Estimation}
\label{sec:mlmf_application}
In this section, we present how to estimate safety probability and path integral optimal control with the proposed MLMF framework.
Let $\xi = p(x) \in \mathbb{R}^k$ be a (lower-dimensional) feature of the system state $x$, where $p: \mathbb{R}^n \rightarrow \mathbb{R}^k$.
Assume $\model_L$ corresponds to the ground truth system dynamics and take the original state $x \in \mathbb{R}^n$ as input, and $\model_l$ for $l = 1, \cdots, L-1$ are $L-1$ correspond to lower-level models that take features $\xi^{(l)} \in \mathbb{R}^{n_{l}}$ as input. Note that usually $n_{l} \leq n$ for $l = 1, \cdots, L-1$ for efficient computation (\ie features are low-dimensional representations of the original system state). 
An example of this can be found in Sec.~\ref{sec:experiments} for the safety probability estimation experiments.

For the safety probability estimation problem, according to~\cite{wang2024physics}, we assume at each level $l$ we have access to a function $r_l: \mathbb{R}^{n_{l}} \rightarrow \mathbb{R}$ such that $\phi(x)$ can be approximately represented using the feature $\xi^{(l)}$ as
\begin{equation}
\label{eq:safety_feature_rep}
    \phi(x) \approx r_l(\xi^{(l)}).
\end{equation}
Then, at each level $l$, the safety probability can be estimated solely from the features $\xi^{(l)}$ via
\begin{equation}
\label{eq:feature_safety_prob}
\begin{aligned}
    P_l = \mathbb{E} \left[\mathds{1}(r_l(\xi^{(l)}_{t}) > 0, \forall t \in [0, T]) \right], \\
\end{aligned}
\end{equation}
where $\mathds{1}(\cdot)$ is the indicator function, and the expectation is taken over the dynamics of model $\model_l$ with a nominal controller $u^{(l)}_t$ for $t \in [0, T]$. Then, given the state-space model at each level, the safety probability can be estimated by MLMF via~\eqref{eq:mlmf}. 

Similarly, for the path integral optimal control problem, we assume at each level $l$ we have access to a function $r_l: \mathbb{R}^{n_{l}} \rightarrow \mathbb{R}$ such that the running cost can be approximately represented using the feature $\xi^{(l)}$ as
\begin{equation}
\label{eq:cost_feature_rep}
    \ell(x) = \Psi(x) \approx r_l(\xi^{(l)}).
\end{equation}
Here, the terminal cost $\Psi$ is assumed to take the same form as the running state cost $\ell$ without loss of generality. 
We then define the cost-to-go function in the feature space at each level as
\begin{equation}
    S^u_l(t) = \int_{t}^{T} (r_l(\xi^{(l)}_\tau) + \frac{1}{2}\|u_\tau\|^2) \; d\tau,
\end{equation}
Then, we can calculate the optimal control at each level via
\begin{equation}
\label{eq:feature_optimal_control}
\begin{aligned}
    & P_l := u^{(l)*}_t = u^{(l)}_t + \\
    & \lim _{s \rightarrow t} \frac{\mathbb{E}_{\xi^{(l)}, u^{(l)}}\left\{\exp \left\{-S^u_l(t)\right\} \int_t^s d w_\tau \mid \xi_t^{(l)}=p_l(x_t)\right\}}{(s-t) \mathbb{E}_{\xi^{(l)}, u^{(l)}_t}\left\{\exp \left\{-S^u_l(t)\right\} \mid \xi_t^{(l)}=p_l(x_t)\right\}},
\end{aligned}
\end{equation}
where $p_l: \mathbb{R}^n \rightarrow \mathbb{R}^{n_l}$ is the mapping from the original state to the feature at level $l$. Finally we can estimate the optimal control through MLMF via~\eqref{eq:mlmf} with $P_l$ in~\eqref{eq:feature_optimal_control}. 
Note that the optimal control found from lower levels can be used as the sampling controller at higher levels to reduce variance~\cite{thijssen2015path}.

The procedures of safety probability and path integral control estimation with MLMF is shown in Alg.~\ref{alg:mlmf_app}. 

\begin{remark}
    In our setting, the highest-fidelity model $h_L$ operates on the full system state $x \in \mathbb{R}^n$, while each lower-fidelity model $h_l$ for $l = 1, \ldots, L-1$ operates on a lower-dimensional feature $\xi^{(l)} = p_l(x) \in \mathbb{R}^{n_l}$ derived from $x$, where $n_l \leq n$. Thus, the models might not share the same input space, but they are coupled through their common dependence on the same high-dimensional state $x$ and are evaluated on shared trajectories, to enable variance reduction.
\end{remark}

\begin{algorithm}[t]
\caption{Safety Probability and Path Integral Control with MLMF}\label{alg:mlmf_app}
\begin{algorithmic}
    \State \textbf{Given:} $L, N_l, a_l, \Delta t_l, \xi^{(l)}, r_l$

    \For {$l \text{ in } 1:L$} \Comment{number of layers}

    \If {$l = 1$}
    \State Estimate~\eqref{eq:feature_safety_prob} or~\eqref{eq:feature_optimal_control} for $P_1^{(N_1)}$ with time discretization $\Delta t_1$ and feature $\xi^{(1)}$ using $N_1$ samples

    \Else

    \State Estimate~\eqref{eq:feature_safety_prob} or~\eqref{eq:feature_optimal_control} for $P_l^{(N_l)}$ and $P_{l-1}^{(N_l)}$ with time discretization $\Delta t_l$ and $\Delta t_{l-1}$ and feature $\xi^{(l)}$ and $\xi^{(l-1)}$ using $N_l$ samples
    
    \EndIf

    \EndFor

    \State \textbf{return} $a_1 P_1^{(N_1)} + \sum_{l=2}^L \left[a_l P_l^{(N_l)} - a_{l-1} P_{l-1}^{(N_l)}\right]$
    
\end{algorithmic}
\end{algorithm}

\section{Experiments}
\label{sec:experiments}
In this section, we present simulation results to show efficacy of the proposed method.
We first show in a safety probability estimation problem how the proposed MLMF method outperforms standard MC and subset simulation method~\cite{au2001estimation}, which is commonly used in rare event simulation. Then, we demonstrate that the proposed method can efficiently solve path integral optimal control problems.

\subsection{Safety Probability Estimation}

The closed-loop system dynamics is given by
\begin{equation}
    d \mathbf{x} = -K \cdot \mathbf{x} \; dt + \sigma \cdot dW_t,
\end{equation}
where $\mathbf{x} = [x_1, x_2, \cdots, x_5]^\top \in \mathbb{R}^5$ is the state, $W_t$ is the 5-dimensional standard Wiener process with $W_0 = \boldsymbol{0}$, $K = [K_1, K_2, K_3, K_4, K_5] = [1, 0.7, 0.6, 0.5, 0.4]$ is the system parameter 
and $\sigma = [\sigma_1, \sigma_2, \sigma_3, \sigma_4, \sigma_5] = [1, 0.7, 0.6, 0.5, 0.4]$ is the noise magnitude.
The safe set is given by $\mathcal{C} = \{\mathbf{x}: x_i \geq 0, \forall i = 1, 2, \cdots, 5\}$.
The computation cost for sampling each dimension $i \in \{1, 2, 3, 4, 5\}$ is assumed to be 
$C_1 = 1, \; C_2 = 10, \; C_3 = 20, \; C_4 = 50, \; C_5 = 500$.

We set the initial state to be $\mathbf{x}_0 = [1, 1, \cdots, 1]^\top$, and estimate the safety probability $F(\mathbf{x}_0)$ defined in~\eqref{eq:safe_prob_def} with horizon $T = 1$ using the proposed MLMF method with $a_l \equiv 1$, $\Delta t_l = 0.1$ for $l = 1, 2, 3$ and $\Delta t_l = 0.05$ for $l = 4, 5$. We obtain each level of estimate by simulating the first $l$ dimensions of the system (\ie $\xi^{(l)} = [x_1, \cdots, x_l]^\top$) and calculating the safety probability with regard to safe set $\mathcal{C}_l = \{\xi^{(l)}: x_i \geq 0, \forall i = 1, \cdots, l\}$.
We compare with \textbf{(i)} standard MC where we directly estimate the safety probability with the 5-dimensional dynamics, \textbf{(ii)} subset simulation where we multiply conditional probabilities of failure for transitional safe set $\mathcal{C}_p = \{\mathbf{x}: x_i \geq p, \forall i = 1, \cdots, 5\}$ with $p = \{0.8, 0.5, 0.3, 0.2, 0\}$, as well as \textbf{(iii)} MLMC (layering in time only) and \textbf{(iv)} MFMC (layering in state space only). Fig.~\ref{fig:risk_comparison} shows the absolute error of the estimate with regard to computation cost for all methods. The ground truth value is obtained by running standard MC with $10^{8}$ samples. We can see that with more samples, the absolute errors decrease for all methods. Layering in time (MLMC) or in state (MFMC) alone results in error reduction compared to standard MC and subset simulation. By leveraging both, the proposed MLMF outperforms all other methods.
\setlength{\belowcaptionskip}{-4pt}
\begin{figure}
    \centering
    \includegraphics[width=7cm]{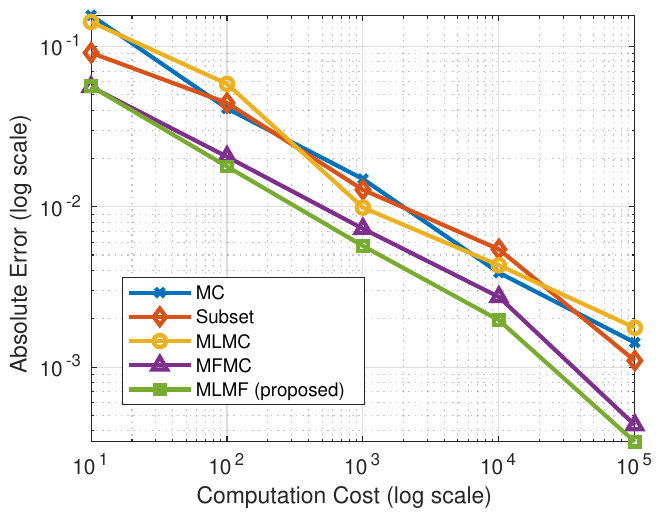}
    \caption{Absolute error of estimates for safety probability.}
    \label{fig:risk_comparison}
\end{figure}


\subsection{Path Integral Control}
We consider a system with discretized dynamics
\begin{equation}
    x_{k+1} = K x_k + (u_k \; dt + \sqrt{dt} \; w_k),
\end{equation}
where $x \in \mathbb{R}$ is the state, $u \in \mathbb{R}$ is the control input, $dt = 0.05$ is the time step, $w \sim \mathcal{N}(0, 1)$ is the noise sampled from a standard normal distribution, $K$ is the parameter of the system with ground truth $K_3 = 0.9$. We also have access to two surrogate models with $K_2 = 0.96$ and $K_1 = 0.99$ with the same form of the dynamics. The running cost for control is given by 
\begin{equation}
    g(x, u) = \frac{1}{2}\|x\|^2 + \frac{1}{2}\|u\|^2,
\end{equation}
and the terminal cost is set to 0 for simplicity. The computation cost for sampling each model is assumed to be $C_1 = 1, C_2 = 20, C_3 = 40$.

We set the initial state $x_0 = 10$, and run standard path integral control with three models as well as the path integral control with MLMF estimates for horizon $T = 5$. We set $a_1 = 0.05, a_2 = 0.1, a_3 = 1$ in the MLMF scheme. Fig.~\ref{fig:cost_computation} shows the optimal control cost and the computation cost. We can see that from model 1 to model 3 the control cost decreases but the computation cost increases, as sampling the more accurate model is more costly. The proposed MLMF method has similar optimal control cost as the ground truth model while having the least computation cost, indicating better performance vs. computation trade-offs. Fig.~\ref{fig:control_input} visualizes the optimal control inputs with different models, and we can see that the proposed MLMF method can closely recover the optimal control from the ground truth model even with much less computation.

\setlength{\belowcaptionskip}{0pt}
\begin{figure}
    \centering
    \includegraphics[width=0.48\textwidth]{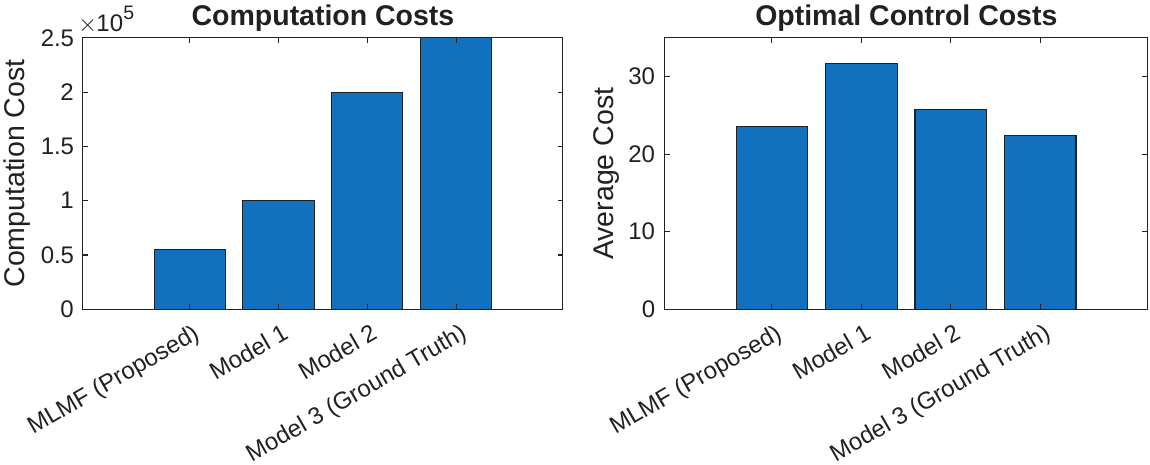}
    \caption{Optimal control cost (left) and computation cost (right) of the path integral control problem.}
    \label{fig:cost_computation}
\end{figure}

\begin{figure}
    \centering
    \includegraphics[width=5.6cm]{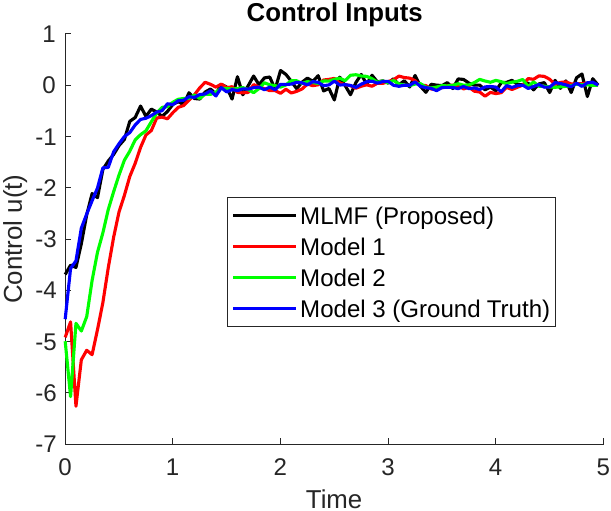}
    \caption{The optimal control input with different models.}
    \label{fig:control_input}
\end{figure}


\section{Conclusion}
\label{sec:conclusion}
In this paper, we consider estimating safety probability and path integral optimal control in stochastic systems. We propose a multi-level multi-fidelity Monte Carlo estimator that leverages layering structure in both time and state space. We show that under certain conditions on the sample number and parameter choice that the estimator is unbiased and consistent, and present sample complexity analysis. We show in experiments that the proposed framework has better computation and accuracy trade-offs than existing methods.

\bibliographystyle{unsrt}        
\bibliography{draft/bibliography/citation}

\end{document}